\documentclass[pdflatex,sn-mathphys-num]{sn-jnl}

\usepackage{graphicx}
\usepackage{amsmath,amssymb,amsfonts}
\usepackage{amsthm}
\usepackage{booktabs}
\usepackage{enumitem}
\usepackage{subcaption}
\usepackage{algorithm}
\usepackage{algpseudocode}

\theoremstyle{thmstyleone}
\newtheorem{theorem}{Theorem}
\newtheorem{lemma}[theorem]{Lemma}

\theoremstyle{thmstyletwo}

\hypersetup{
  pdftitle={A Fast Greedy Algorithm with a 0.45-Approximation Guarantee for the Airplane Refueling Problem},
  pdfauthor={Oleg Lomachenko and Arnold Enikeev},
  pdfsubject={Approximation algorithms for the Airplane Refueling Problem},
  pdfkeywords={airplane refueling problem, scheduling, greedy algorithms, approximation algorithms, dynamic convex hull}
}

\begin{document}

\title[Fast greedy algorithm for ARP]{A Fast Greedy Algorithm
with a $0.45$-Approximation Guarantee
for the Airplane Refueling Problem}

\author*[1]{\fnm{Oleg} \sur{Lomachenko}}
\email{lomachenko.science@gmail.com}
\author[1]{\fnm{Arnold} \sur{Enikeev}}

\affil*[1]{\orgname{Goodsforecast},
  \orgaddress{\city{Moscow}, \country{Russia}}}

\abstract{
We study a natural greedy algorithm for the Airplane Refueling Problem (ARP),
based on a well-known priority rule. We show that this algorithm achieves
an approximation guarantee of $9/20$, noticeably improving the previously
known $1/12$ guarantee for a fast algorithm. We also show that the same
priority rule can be implemented in $O(n\log n)$ time and $O(n)$ space using a
dynamic convex-hull data structure. These results provide a worst-case
approximation guarantee for an existing greedy priority rule and establish
an efficient implementation. In the real-RAM model, the
running-time bound is independent of the magnitude of the processing times.
The resulting method is also suitable
as a fast primal routine inside more complex exact and approximation algorithms
for large instances. These results narrow the gap between theoretically strong
but complex approximation schemes and lightweight algorithms that can be used
directly in computational pipelines.
The algorithm combines a simple selection rule with a provable performance
bound, making it a useful baseline for further algorithmic studies of ARP.
}

\keywords{airplane refueling problem, scheduling,
greedy algorithms, approximation algorithms, dynamic convex hull}

\maketitle

\section{Introduction}
\label{sec:introduction}

The Airplane Refueling Problem (ARP) is a classical combinatorial
optimization problem dating back to the puzzle of Gamow and
Stern~\cite{GamowStern1958}. Early mathematical studies of the range of a
fleet of aircraft include Franklin~\cite{Franklin1960} and
Goldman~\cite{Goldman1961}, with the latter deriving optimal policies for
fleets with equal fuel consumption rates or equal fuel capacities.
In its modern formulation, it can naturally be
viewed as a single-machine sequencing problem.
Let $$J=\{1,\ldots,n\}$$
be a set of jobs. Each job $j\in J$ has a processing time
$$    p_j>0$$
and a weight $$w_j>0$$
For a permutation of jobs
$\pi=(\pi_1,\ldots,\pi_n),$
let
$$C_{\pi_k}=\sum_{\ell=1}^{k} p_{\pi_\ell}$$
denote the completion time of job $\pi_k$. \newline
The goal is to find a permutation
$\pi$ maximizing
$$F(\pi)=\sum_{j=1}^{n}\frac{w_j}{C_j}.$$
Hence, the problem is equivalent to finding
$$\max_{\pi\in S_n} \sum_{j=1}^{n}\frac{w_j}{C_j}$$
$S_n$ here is the set of all permutations of the $n$ jobs. In the original interpretation, $p_j$ corresponds to the fuel consumption rate of
aircraft $j$, and $w_j$ represents its fuel capacity~\cite{GamzuSegev2019,Vasquez2015,LiEtAl2019}.
\newline
Despite its simple formulation, the ARP remains a nontrivial optimization
problem. Its exact computational complexity has been explicitly discussed in
the literature~\cite{Woeginger2010,GamzuSegev2019,GmeinerSchulz2026}, while
closely related and more general models, including robust appointment
scheduling with heterogeneous costs and block stacking with counterbalancing,
are known to be computationally hard~\cite{SchulzUdwani2019,GmeinerSchulz2026}.
In this paper we focus on approximation guarantees and fast implementation of
a natural greedy rule for the ARP.
\newline
Regarding approximation, the problem has been investigated more extensively. Gamzu and Segev~\cite{GamzuSegev2019} developed a PTAS for the
ARP. However, the authors of the paper point out that this construction involves heavy guessing and rounding steps, potentially resulting in an impractical runtime. As a simple alternative, they
proposed a fast algorithm guaranteeing
$$\frac{1}{12}\,\operatorname{OPT}$$
The computational complexity of this algorithm is
$$O(n\log C),\qquad C=\sum_{j=1}^{n}p_j$$
Therefore, for the ARP, there is a notable gap between the strong theoretical guarantees of the PTAS and the guarantees provided by practically implementable algorithms.
\newline
This naturally leads us to consider a particularly simple greedy algorithm. Suppose
that, before the next selection, the total processing time of all previous jobs is $S$. For every job that has not yet been selected,
define
$$\phi_j(S) = \frac{w_j}{p_j(S+p_j)}.$$
At each step, the algorithm selects a job with the maximum value of
$\phi_j(S)$.
\newline
This rule is not new in the ARP literature. Li et
al.~\cite{LiEtAl2019} use the same function when constructing potential
schedules consistent with known dominance conditions. Moreover, this rule
arises naturally from the analysis of a local interchange of two adjacent
jobs. To the best of our knowledge, no worst-case approximation guarantee
has previously been established for this specific adaptive greedy rule for ARP.
The ARP is also explicitly mentioned by Bansal et al.~\cite{BansalEtAl2017}
in their study of local and global precedence relations for nonlinear
single-machine scheduling, where local precedence is defined through
adjacent interchanges, the same type of exchange underlying the greedy rule
studied here.
Lin and He~\cite{LinHe2026} also study precedence criteria and propose
a gradient-based scheduling heuristic for the ARP using neighborhood
exchange arguments.
\newline
Further motivation comes from the closely related
\emph{Robust Appointment Scheduling} (RAS) problem. For the variant in which
the job order is chosen simultaneously and the underutilization costs are
homogeneous, Schulz and Udwani~\cite{SchulzUdwani2019} studied a closely
related locally optimal algorithm based on the rule
$$\frac{\Delta_j}{o_j\bigl(o_j+o(S)+u\bigr)}$$
After normalization, one may set $u=1$. Therefore, this criterion differs from the
greedy criterion for the ARP only by an additional positive shift in the
denominator. For the corresponding algorithm, the authors obtained a strong
constant-factor guarantee for the RAS minimization objective, with an
approximation ratio of approximately
$$1.0604$$
Recently, Gmeiner and Schulz~\cite{GmeinerSchulz2026} formalized the
connection between the ARP and the corresponding version of RAS and showed
that the two problems are equivalent with respect to the existence of a
polynomial-time exact algorithm. This equivalence, however, does not preserve
multiplicative approximation guarantees. The reason is the
additional shift in the denominator of the RAS objective. The authors
explicitly note that approximation guarantees cannot be transferred directly between these two problems.
\newline
Consequently, even a very strong guarantee for the locally optimal algorithm
in RAS does not by itself answer whether the analogous greedy algorithm has
any constant-factor guarantee for the ARP. This is the main motivation for
our work.
\newline
We show that such a guarantee indeed exists. For the greedy algorithm
described above, we prove a constant-factor approximation guarantee of at least
$45\%$ of the optimal value:
$$G \geq \frac9{20}\,\operatorname{OPT}$$
This simple locally optimal algorithm is guaranteed to retain
at least $45\%$ of the optimal objective value, substantially improving upon
the known $1/12$ guarantee for the ARP.
\newline
Moreover, we show that quadratic running time is not inherent in this greedy rule. 
\newline Indeed,
$$\frac{1}{\phi_j(S)} = \frac{p_j}{w_j}S + \frac{p_j^2}{w_j}$$
so each job defines a linear function of $S$, and selecting the next job can
be reduced to an extreme-point query over the corresponding lines. Using the
semi-dynamic convex-hull of Hershberger and
Suri~\cite{HershbergerSuri1992}, we implement the entire algorithm in
$$O(n\log n)$$
time instead of the naive $O(n^2)$.
\newline
This paper provides two related results: we establish a constant-factor
approximation guarantee for a natural greedy algorithm and accelerate its
implementation to $O(n\log n)$. Taken together, these results establish that a
simple locally optimal approach to the ARP has considerably stronger
theoretical guarantees than were previously known.

\section{Adjacent-Interchange Criterion}
Adjacent-interchange arguments are classical in single-stage scheduling.
Smith~\cite{Smith1956} formulated a general sufficient condition under
which an ordering can be justified by comparing adjacent pairs and showed
that, in important cases, such comparisons reduce to ordering by an
individual index.

Consider two adjacent jobs $a$ and $b$. Suppose that the jobs preceding them have total processing time $S$. Compare the two local orders $a,b$ and $b,a$.
Their contributions to the objective are
$$\frac{w_a}{S+p_a} + \frac{w_b}{S+p_a+p_b}
\qquad \text{and} \qquad \frac{w_b}{S+p_b}+ \frac{w_a}{S+p_a+p_b} \text{  respectively.}$$
The completion times of all subsequent jobs are independent of the order. Hence
\begin{equation}
\begin{aligned}
&F(\ldots,a,b,\ldots)-F(\ldots,b,a,\ldots)\\
&\quad=\frac{w_a}{S+p_a} + \frac{w_b}{S+p_a+p_b} - \frac{w_b}{S+p_b}-\frac{w_a}{S+p_a+p_b}\\
&\quad=\frac{p_ap_b}{S+p_a+p_b}
\left(\frac{w_a}{p_a(S+p_a)}-\frac{w_b}{p_b(S+p_b)} \right)
\end{aligned}
\label{eq:swap}
\end{equation}
\newline
Define
\begin{equation}
    \phi_i(S)=\frac{w_i}{p_i(S+p_i)}.
    \label{eq:phi}
\end{equation}
\newline
Equation \eqref{eq:swap} implies the following criterion:

\begin{lemma}[Adjacent-interchange criterion]\label{lem:adjacent-interchange}
The order $a,b$ is no worse than the order $b,a$ if and only if
$$\phi_a(S)\ge \phi_b(S)$$
\end{lemma}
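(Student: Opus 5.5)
The proof reduces to verifying the algebraic identity \eqref{eq:swap} and then reading off the sign. First observe that swapping adjacent $a,b$ changes only the completion times of $a$ and $b$. Every job before them keeps its completion time, and every job after them sees the same prefix sum $S+p_a+p_b$. So $F(\ldots,a,b,\ldots)-F(\ldots,b,a,\ldots)$ equals the difference of the two local contributions displayed just before \eqref{eq:swap}. The statement "$a,b$ is no worse than $b,a$" means exactly that this difference is nonnegative.

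Next I verify the factorization. Group the terms by weight:
\[
w_a\Bigl(\frac{1}{S+p_a}-\frac{1}{S+p_a+p_b}\Bigr)-w_b\Bigl(\frac{1}{S+p_b}-\frac{1}{S+p_a+p_b}\Bigr).
\]
The first bracket equals $\frac{p_b}{(S+p_a)(S+p_a+p_b)}$, and the second equals $\frac{p_a}{(S+p_b)(S+p_a+p_b)}$. Factoring out $\frac{p_ap_b}{S+p_a+p_b}$ leaves
\[
\frac{w_a}{p_a(S+p_a)}-\frac{w_b}{p_b(S+p_b)}=\phi_a(S)-\phi_b(S),
\]
using the definition \eqref{eq:phi}.

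Finally I conclude from the signs. Since $p_a,p_b>0$ and $S\ge 0$, the prefactor $\frac{p_ap_b}{S+p_a+p_b}$ is strictly positive. The difference is therefore $\ge 0$ if and only if $\phi_a(S)\ge\phi_b(S)$, which gives both directions of the equivalence. There is no real obstacle. The only care needed is the bookkeeping in the second step, and noting that the prefactor is strictly positive, not merely nonnegative, so the "only if" direction holds.
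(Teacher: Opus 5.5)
Your proposal is correct and matches the paper's proof: both use the factorization \eqref{eq:swap} and the strict positivity of $p_ap_b/(S+p_a+p_b)$ to get both directions. Your version also checks the algebra behind \eqref{eq:swap}, which the paper displays without intermediate steps.
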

\begin{proof}
By \eqref{eq:swap}, the difference between the two objective values is
$p_ap_b/(S+p_a+p_b)$ times $\phi_a(S)-\phi_b(S)$.
Since $p_a,p_b>0$ and $S\ge0$, this factor is strictly positive,
which proves the equivalence.
\end{proof}
\section{The Greedy Algorithm}
Lemma~\ref{lem:adjacent-interchange} leads to a greedy algorithm.
This selection rule is used by Li et al.
\cite{LiEtAl2019} in their construction of a partial potential schedule.
\newline
\textbf{Greedy algorithm.}
Initialize the accumulated processing time $S$:
$$S:=0$$
Initialize set of remaining jobs:
$$R:=\{1,\ldots,n\}$$
and an empty permutation $\pi^G$. \newline While $R\neq\varnothing$ find $\psi$:
\begin{equation}
    \psi\in\mathop{\mathrm{arg\,max}}_{i\in R}\phi_i(S)=
    \mathop{\mathrm{arg\,max}}_{i\in R}
    \frac{w_i}{p_i(S+p_i)}
    \label{eq:greedy}
\end{equation}
append $\psi$ to the end of $\pi^G$, then update:
$$S\leftarrow S+p_\psi \qquad \text{and} \qquad R\leftarrow R\setminus\{\psi\}$$
Denote the resulting solution $G=F(\pi^G)$.
\begin{lemma}[Local optimality]
The schedule $\pi^G$ is locally optimal: no adjacent interchange increases the objective value $F$.
\end{lemma}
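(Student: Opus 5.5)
We prove the claim directly from Lemma~\ref{lem:adjacent-interchange}. Fix any position $k\in\{1,\ldots,n-1\}$ and let $a=\pi^G_k$ and $b=\pi^G_{k+1}$ be the two adjacent jobs there. Let $S=\sum_{\ell<k}p_{\pi^G_\ell}$ be the accumulated processing time just before $a$ is chosen. By Lemma~\ref{lem:adjacent-interchange}, swapping $a$ and $b$ does not increase $F$ if and only if $\phi_a(S)\ge\phi_b(S)$. So it suffices to verify this inequality at every position $k$.

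At the iteration that selects $a$, the accumulated time is exactly $S$, and the remaining set $R$ contains both $a$ and $b$, since $b$ is selected later. By the greedy rule \eqref{eq:greedy}, $a$ maximizes $\phi_i(S)$ over $i\in R$. In particular $\phi_a(S)\ge\phi_b(S)$, because $b\in R$.

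Swapping $a$ and $b$ leaves the completion times of all jobs outside positions $k,k+1$ unchanged. Hence the difference in $F$ is exactly the two-job expression in \eqref{eq:swap}. By Lemma~\ref{lem:adjacent-interchange} it equals a positive factor times $\phi_a(S)-\phi_b(S)\ge 0$. Therefore the swapped schedule has objective value no larger than $F(\pi^G)$. Since $k$ was arbitrary, no adjacent interchange increases $F$.

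The only point needing care is that the $S$ in the criterion matches the $S$ used by the greedy step. It does, because the prefix before position $k$ is identical in both schedules and equals the prefix the greedy had already built. Ties in the $\arg\max$ cause no difficulty, since the statement only requires that no swap be strictly improving.
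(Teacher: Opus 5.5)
Your proof is correct and follows the paper's argument: when $a$ is selected with accumulated time $S$, job $b$ is still in $R$, so the greedy rule gives $\phi_a(S)\ge\phi_b(S)$, and the adjacent-interchange criterion then shows the order $a,b$ is no worse than $b,a$. Your added remarks, that the prefix sum $S$ is the same in both schedules and that ties cause no difficulty, are sound and make explicit what the paper leaves implicit.
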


\begin{proof}
Consider two adjacent jobs $a,b$ in the greedy schedule. Let $S$ be the total processing time of the jobs preceding $a$. When $a$ was selected, $b$ was still available. By the greedy rule \eqref{eq:greedy}:
$$\phi_a(S)\ge \phi_b(S)$$
By Lemma~\ref{lem:adjacent-interchange}, the order $a,b$ is no worse than the order $b,a$. Since $a$ and $b$ were chosen arbitrarily, this completes the proof.
\end{proof}
\section{Approximation Guarantee}
The guarantee below concerns the schedule constructed by the greedy rule,
not an arbitrary schedule that is locally optimal under adjacent interchanges.
\newline
We begin with an integral representation of the objective through the
total weight of completed jobs. This is analogous to the area-under-the-curve
representation through unfinished weight used by Megow and
Verschae~\cite{MegowVerschae2018} in scheduling with varying machine speed,
with the kernel $t^{-2}$ accounting for our reciprocal completion-time objective.

Let $\pi^G$ and $\pi^*$ denote the greedy and optimal schedules, respectively,
and let $G=F(\pi^G)$, $\operatorname{OPT}=F(\pi^*)$, and $P=\sum_j p_j$.
\newline
For any schedule $\pi$, define
$$W_\pi(t)=\sum_{j:C_j^\pi\le t}w_j$$
Since $\int_{C_j^\pi}^P w_jt^{-2}\,dt=w_j/C_j^\pi-w_j/P$,
and the last term is independent of the schedule,
\begin{equation}
\operatorname{OPT}-G=\int_0^P
\frac{W_{\pi^*}(t)-W_{\pi^G}(t)}{t^2}\,dt
\label{eq:gap}
\end{equation}
\newline
For each job $a$ in the greedy schedule, let $S_a$ be its starting time
and define
$$q(t)=\phi_a(S_a)=\frac{w_a}{p_a(S_a+p_a)},
\qquad t\in[S_a,S_a+p_a)$$
Since the greedy priority of every remaining job decreases with time,
$q$ is nonincreasing. Indeed, after a greedy step the selected job is removed,
the value of $S$ increases, and for every remaining job
$w_j/(p_j(S+p_j))$ can only decrease. Moreover,
\begin{equation}
\int_0^P q(t)\,dt = \sum_a\frac{w_a}{C_a^G}=G
\label{eq:q-integral}
\end{equation}
\newline
Let $M(t)=\{j:C_j^*\le t<C_j^G\}$.
If the greedy schedule processes job $a$ at time $t$, every $j\in M(t)$
was available when $a$ was selected. Hence
$$q(t)=\phi_a(S_a)\ge\phi_j(S_a)
\ge\frac{w_j}{p_j(t+p_j)}$$
so $w_j\le q(t)p_j(t+p_j)$. Hence,

$$W_{\pi^*}(t)-W_{\pi^G}(t)
\le\sum_{j\in M(t)}w_j
\le q(t)\sum_{j\in M(t)}p_j(t+p_j)$$
Defining
$$h(t)=\sum_{j\in M(t)}
\left(\frac{p_j}{t}+\frac{p_j^2}{t^2}\right),\qquad t>0,$$
and setting $h(0)=0$,
we obtain
\begin{equation}
\operatorname{OPT}-G\le\int_0^P q(t)h(t)\,dt
\label{eq:gap-bound}
\end{equation}
\begin{lemma}
For every $T>0$,
$$\int_0^T h(t)\,dt\le\frac{11}{9}T$$
\end{lemma}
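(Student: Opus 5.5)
The plan is to exchange the order of summation and integration, so that $\int_0^T h(t)\,dt$ becomes a sum of per-job contributions, and then to bound that sum using how jobs are packed in $\pi^*$. A job $j$ contributes to $h(t)$ exactly for $t\in[C_j^*,\min(C_j^G,T))$. Writing $b_j=\min(C_j^G,T)$, we get
$$\int_0^T h(t)\,dt=\sum_{j:\,C_j^*<b_j}\left(p_j\ln\frac{b_j}{C_j^*}+p_j^2\Bigl(\frac{1}{C_j^*}-\frac{1}{b_j}\Bigr)\right).$$
The structural facts I intend to use are the following.
\begin{enumerate}[label=(\roman*)]
\item Every $j\in M(t)$ is finished in $\pi^*$ by time $t$. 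Hence $\sum_{j\in M(t)}p_j\le t-\sum_{j\in A(t)}p_j$, where $A(t)$ is the set of jobs finished by time $t$ in both schedules.
\item $C_j^*\ge p_j$, and more precisely $C_j^*$ is at least the total length of all jobs preceding $j$ in $\pi^*$.
\item Greedy has processed exactly $t$ units of work by time $t$. Only the single job $a(t)$ currently being processed by greedy can be unfinished in $\pi^G$ while having started before $t$.
\end{enumerate}

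Pointwise, set $x(t)=\frac1t\sum_{M(t)}p_j\le 1$ and $y(t)=\frac1t\max_{M(t)}p_j\le x(t)$. Then $h(t)\le x(t)+x(t)y(t)$. The crude bound from this is $h\le 2$, so the gain must come from averaging over $t$. The factor $y(t)$ can be close to $x(t)$ only when a single job of length comparable to $t$ sits in $M(t)$. Such a job $j$ has $C_j^*\ge p_j$, so it can only contribute for $t\ge p_j$, and its weight $p_j^2/t^2$ decays. The first concrete step is to check a single job: with $p_j=uT$, its contribution is at most $T\bigl(u\ln(1/u)+u-u^2\bigr)$, which is maximized near $u\approx0.43$ with value about $0.61T$. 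A continuum of infinitesimal jobs packed in $\pi^*$ contributes at most $\int_0^T\ln(T/s)\,ds=T$ from the linear terms and essentially nothing from the quadratic terms.

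The main step is to combine these two regimes. I would split the jobs in $M(t)$ into the \emph{large} ones, with $p_j>t/2$ (at most one such job at any time, since their lengths sum to at most $t$), and the \emph{small} ones. The small jobs give $\sum p_j^2/t^2\le \frac12\sum p_j/t$, and by (i) their mass is reduced by whatever the large job occupies. I then plan to write the bound as a functional of the $\pi^*$-prefix profile, namely a big job of relative size $u$ together with fine jobs filling the remaining capacity. I would maximize this one-parameter family, using (iii) to control when the big job leaves $M(t)$, and verify that the maximum equals $\frac{11}{9}T$. I expect the extremal configuration to be one large job of size about $T/3$ with small jobs around it.

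I expect the main obstacle to be making this reduction rigorous, that is, showing that the worst case really is "one large job plus infinitesimal filler". The route I would try first is an exchange/convexity argument: splitting a non-large job into smaller pieces placed consecutively in $\pi^*$ does not decrease the per-job integral. This would leave only large jobs, whose interaction is constrained by (i) and (ii). If that fails, I would fall back on a direct case analysis on the largest job in $M(t)$, integrating over dyadic ranges of $t$.
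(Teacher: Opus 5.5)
Your first step, exchanging sum and integral, is the same as the paper's. Each per-job term is increasing in $b_j$, so you may replace $b_j$ by $T$, as the paper does. After that the greedy schedule plays no role, so (iii) is not needed. Of the $\pi^*$-structure, only the fact that the intervals $[C_{j-1}^*,C_j^*]$ tile $[0,C_m^*]$ matters (jobs relabelled in $\pi^*$-order, $m$ the last with $C_m^*\le T$).

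\textbf{The gap is the main step.} You never carry out the reduction to ``one large job plus infinitesimal filler,'' and the exchange claim you propose for it is false in the direction you need. With $r_j=C_{j-1}^*/C_j^*$ and $\psi(r)=r(r-1-\ln r)/(1-r)$ there is the exact identity
\[
\sum_{j\le m}\Bigl[p_j\ln\frac{T}{C_j^*}+p_j^2\Bigl(\frac{1}{C_j^*}-\frac{1}{T}\Bigr)\Bigr]
=\int_0^{C_m^*}\ln\frac{T}{t}\,dt+\sum_{j\le m}\Bigl(p_j\psi(r_j)-\frac{p_j^2}{T}\Bigr).
\]
Splitting job $j$ into fine consecutive pieces replaces $p_j\psi(r_j)-p_j^2/T$ by essentially $0$. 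So splitting \emph{decreases} the total whenever $p_j<T\psi(r_j)$. For instance, with $T=1$ a job occupying $[0.05,0.1]$ in $\pi^*$ is never large ($p_j\le t/2$ for all $t\ge C_j^*$). Yet it contributes about $0.1376$ whole and about $0.1305$ when split.

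The identity also shows that your predicted extremal picture is off. A single large job plus filler gives at most $T+\max_{p}\bigl(p\psi(r)-p^2/T\bigr)=T\bigl(1+\psi(r)^2/4\bigr)\le\frac{82}{81}T$. A job of size $T/3$ is in fact worse than filler. Meanwhile, a geometric chain of never-large jobs with $C_{j-1}^*/C_j^*\approx0.55$ gains about $0.028T$. So nothing in your one-parameter family comes near $\frac{11}{9}T$, and the worst case is not of that form.

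\textbf{The missing idea} is a uniform per-job estimate rather than an extremal-configuration argument. Compare each job's term with $\int_{C_{j-1}^*}^{C_j^*}\ln(T/t)\,dt$. After dropping the nonpositive $-p_j^2/T$, the excess is at most $p_j\psi(r_j)$. Moreover $\psi\le\frac29$ on $(0,1)$: with $x=1/r$ this is $f(x)=\frac{(x-1)(2x+9)}{9x}-\ln x\ge0$. Since $f'(x)=\frac{(x-3)(2x-3)}{9x^2}$, the minimum on $[1,\infty)$ is at $x=1$ or $x=3$, where $f\ge0$.

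Summing over $j\le m$, the integrals add up to $\int_0^{C_m^*}\ln(T/t)\,dt\le T$. The excesses add up to at most $\frac29C_m^*\le\frac29T$, which gives the lemma. The constant $\frac{11}{9}$ is an artifact of this decoupled estimate: the bounds $\psi\le\frac29$, $-p_j^2/T\le0$ and $C_m^*\le T$ cannot be tight simultaneously. That is why you should not expect to ``verify that the maximum equals $\frac{11}{9}T$'' for any concrete family.
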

\begin{proof}
Within this proof, relabel the jobs in their optimal processing order,
so that $C_j^*=\sum_{i=1}^j p_i$, and set $C_0^*=0$.
If $T<C_1^*$, then $M(t)$ is empty for $0\le t\le T$ and the claim is immediate.
Hence assume $T\ge C_1^*$ and let $m=\max\{j:C_j^*\le T\}$.
\newline
Dropping the condition $t<C_j^G$ in the definition of $M(t)$ gives
\begin{equation}
\int_0^T h(t)\,dt
\le\sum_{j:C_j^*\le T}
\left[
p_j\ln\frac{T}{C_j^*}
+p_j^2\left(\frac1{C_j^*}-\frac1T\right)
\right]
\label{eq:h-prefix}
\end{equation}
\newline
Consider a job $j\le m$ in the optimal schedule and set
$a=C_{j-1}^*$, $b=C_j^*$, $p_j=b-a$, and $r=a/b$. Since $p_j>0$, we have
$a<b$, hence $0\le r<1$.
With expressions at $r=0$ interpreted by continuity, since $p_j=b(1-r)$ and $a=br$,
$$\int_a^b\ln\frac{T}{t}\,dt-p_j\ln\frac{T}{b}
=\int_a^b\ln\frac{b}{t}\,dt
=b(1-r+r\ln r)$$
Since $b=C_j^*\le T$,
$$p_j^2\left(\frac1b-\frac1T\right)
\le \frac{p_j^2}{b}=b(1-r)^2$$
Hence
\begin{align*}
p_j\ln\frac Tb+p_j^2\left(\frac1b-\frac1T\right)
-\int_a^b\ln\frac Tt\,dt 
\le br(r-1-\ln r)=p_j\psi(r)
\end{align*}
where
$$\psi(r)=\frac{r(r-1-\ln r)}{1-r},
\qquad 0<r<1,$$
with $\psi(0)=0$ by continuity.
\newline
We claim that $\psi(r)\le 2/9$ for every $0<r<1$. Set
$x=1/r>1$. Then
$$\psi(r)=\frac{r(r-1-\ln r)}{1-r} = \frac{\frac1x\left(\frac1x-1+\ln x\right)}{1-\frac1x} = \frac{\ln x+\frac1x-1}{x-1}$$
Hence $\psi(r)\le 2/9, \ 0 <r<1$ is equivalent to
$$\ln x+\frac1x-1\le\frac{2}{9}(x-1)$$
or, equivalently,
$$\ln x \le \frac{2}{9}(x-1)+1-\frac1x = \frac{(x-1)(2x+9)}{9x}$$
Hence $\psi(r) \leq \frac 29, \ 0<r<1$ is equivalent to $f(x)\ge0$, $x>1$, where
$$f(x)=\frac{(x-1)(2x+9)}{9x}-\ln x, \qquad f'(x)=\frac{(x-3)(2x-3)}{9x^2}$$
Extending $f$ continuously to $x=1$, we see that its minimum on $[1,\infty)$ is attained at $x=1$
or $x=3$. Since $f(1)=0$ and $f(3)=10/9-\ln3>0$, we obtain $\psi(r)\le 2/9$.
\newline
Consequently, each summand in \eqref{eq:h-prefix} is at most
$$\int_{C_{j-1}^*}^{C_j^*}\ln\frac Tt\,dt+\frac29p_j$$
Recall that $C_m^*\le T$ is the last optimal completion time not exceeding $T$.
Summing over the corresponding jobs,
$$\int_0^T h(t)\,dt
\le\int_0^{C_m^*}\ln\frac Tt\,dt+\frac29C_m^*
\le T+\frac29T=\frac{11}{9}T$$
\end{proof}
\begin{theorem}
The greedy algorithm satisfies
$$G\ge\frac9{20}\operatorname{OPT}$$
\end{theorem}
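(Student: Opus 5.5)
We combine \eqref{eq:gap-bound} with the preceding lemma by integrating by parts against the nonincreasing function $q$. The goal is to show
$$\int_0^P q(t)h(t)\,dt\le\frac{11}{9}\int_0^P q(t)\,dt=\frac{11}{9}G,$$
using \eqref{eq:q-integral}. Then $\operatorname{OPT}-G\le\frac{11}{9}G$ gives $\operatorname{OPT}\le\frac{20}{9}G$, which is exactly the claim $G\ge\frac9{20}\operatorname{OPT}$.

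\textbf{Layer-cake representation of $q$.} Since $q$ is nonincreasing, nonnegative and piecewise constant with finitely many steps, we can write it as a nonnegative combination of indicators of initial intervals. Let the distinct greedy values be $q_1\ge q_2\ge\dots\ge q_n$ on consecutive intervals ending at $C^G_{(1)}<\dots<C^G_{(n)}=P$, and set $q_{n+1}=0$. Then
$$q(t)=\sum_{k=1}^{n}(q_k-q_{k+1})\,\mathbf 1[t<C^G_{(k)}]$$
for almost every $t\in[0,P]$, and every coefficient $q_k-q_{k+1}$ is nonnegative. Substituting this into $\int_0^P q h\,dt$ expresses it as
$$\sum_k (q_k-q_{k+1})\int_0^{C^G_{(k)}}h(t)\,dt.$$
The same substitution gives
$$\int_0^P q\,dt=\sum_k (q_k-q_{k+1})\,C^G_{(k)}.$$
The preceding lemma applied with $T=C^G_{(k)}$ bounds each inner integral by $\frac{11}{9}C^G_{(k)}$. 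Since all coefficients are nonnegative, summing the bounds termwise yields the desired inequality.

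\textbf{Technical points.} First, $h$ must be integrable near $0$ for the interchange of sum and integral to be valid. This holds because $h(t)=0$ for $t<C_1^*$, since $M(t)$ is empty there, and $h$ is bounded on $[C_1^*,P]$. Second, the pointwise inequality $W_{\pi^*}(t)-W_{\pi^G}(t)\le q(t)h(t)t^2$ is needed only for almost every $t$. At a boundary point of a greedy interval the job being processed is ambiguous, but such points form a null set.

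\textbf{Main obstacle.} Nothing in this argument should be difficult, since the preceding lemma was stated for every $T>0$ precisely so that it applies to each initial interval. The one substantive point is the monotonicity of $q$, which the paper has already justified. Without it, the decomposition into initial intervals would have negative coefficients and the termwise summation would fail.
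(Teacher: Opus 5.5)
Your proposal is correct and takes essentially the same approach as the paper. Both use the layer-cake decomposition $q(t)=\sum_{k}(q_k-q_{k+1})\mathbb{I}_{\{t<T_k\}}$ with nonnegative coefficients, apply the prefix lemma at each greedy completion time $T_k$, and use the identity $\sum_k (q_k-q_{k+1})T_k=G$; your extra remarks on integrability of $h$ near $0$ and on almost-everywhere validity are harmless, since the paper's half-open intervals $[S_a,S_a+p_a)$ already make the pointwise bound hold for every $t$.
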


\begin{proof}
We use the same general summation-by-parts device as Bansal
et al.~\cite{BansalEtAl2017} in their analysis of precedence relations:
prefix inequalities are combined with nonnegative differences of a
monotone sequence. Here, this transfers the preceding prefix bound to
an integral weighted by the greedy priority $q$.
Let $0=T_0<T_1<\cdots<T_n=P$ be the greedy completion times,
and let $q(t)=q_k$ on $[T_{k-1},T_k)$.
Since $q_1\ge\cdots\ge q_n\ge0$, setting $q_{n+1}=0$ gives
$$q(t)=\sum_{k=1}^n(q_k-q_{k+1}) \mathbb{I}_{\{t<T_k\}}$$
By the preceding lemma and \eqref{eq:q-integral},
\begin{align*}
\int_0^P q(t)h(t)\,dt
&=\sum_{k=1}^n(q_k-q_{k+1})
  \int_0^{T_k}h(t)\,dt\\
&\le\frac{11}{9}\sum_{k=1}^n(q_k-q_{k+1})T_k
=\frac{11}{9}G.
\end{align*}
Using \eqref{eq:gap-bound}, this implies
$\operatorname{OPT}-G\le \frac {11}9G$ and $G\ge \frac 9 {20}\operatorname{OPT}$.
\end{proof}
\section{\texorpdfstring{An $O(n\log n)$ Implementation}{An O(n log n) Implementation}}\label{sec:implementation}
Direct computation of the maximum in \eqref{eq:greedy} leads to an $O(n^2)$ implementation. We show that the greedy order can be obtained in $O(n\log n)$ time.
\newline
We first restrict attention to the positive-weight jobs
$\Theta_+=\{i:w_i>0\}$.\footnote{\label{fn:zero-weight}If $w_i=0$, then $\phi_i(S)=0$ for every
$S\ge0$, whereas every remaining positive-weight job has positive priority.
Thus zero-weight jobs can be omitted from the geometric data structure and
appended at the end in arbitrary order. Their contribution to the objective
is zero. For definiteness, the implementation appends them in increasing
index order.}
For these jobs, maximizing $\phi_i(S)$ is equivalent to minimizing
$1/\phi_i(S)$:
\begin{equation}
    \frac{1}{\phi_i(S)} = \frac{p_i}{w_i}S + \frac{p_i^2}{w_i}
    \label{eq:line}
\end{equation}
For each positive-weight job $i$, define the point
\begin{equation}
    q_i=(a_i,b_i) = \left( \frac{p_i}{w_i}, \frac{p_i^2}{w_i} \right)
    \label{eq:point}
\end{equation}
Since $\langle q_i,(S,1)\rangle=p_i(S+p_i)/w_i=1/\phi_i(S)$ and
$\phi_i(S)>0$, the greedy choice among the remaining positive-weight jobs
$R\subseteq\Theta_+$ satisfies
\begin{equation}
    \mathop{\mathrm{arg\,max}}_{i\in R} \phi_i(S) = \mathop{\mathrm{arg\,min}}_{i\in R} \langle q_i,(S,1)\rangle
    \label{eq:extreme}
\end{equation}
The minimum of a linear functional over a finite set of points is attained at an extreme point of their convex hull. Hence, every greedy step can be interpreted as an extreme-point query on the current convex hull in the direction $-(S,1)$.\footnote{If several jobs attain the maximum priority, any of them is an admissible
greedy choice, and the approximation proof does not require strict
inequalities. The theoretical algorithm may return any maximizing job;
it does not require a smallest-index tie-breaking rule.}
Indeed, maximizing $\langle q_i,(-S,-1)\rangle$ is equivalent to minimizing
$\langle q_i,(S,1)\rangle$, so the query returns a job admissible under the
original greedy rule.
\newline
After a job is selected at each step of the greedy algorithm, it is removed
from the remaining jobs. Coincident points are grouped: each distinct point
$q$ stores a list $L(q)$ of jobs with those coordinates, and the point is
deleted from the hull structure only when its list becomes empty.
Grouping by lexicographically sorting the coordinates takes $O(n\log n)$
time and $O(n)$ space, while removing a job from its list takes $O(1)$ time.
Thus we need a deletion-only planar convex-hull structure. Using such a structure gives $O(n\log n)$ preprocessing, $O(\log n)$ extreme-point queries, and $O(\log n)$ amortized deletions.
Figure~\ref{fig:convex-hull-greedy-steps} illustrates the geometric interpretation of the greedy rule by a sequence of simple hull configurations.  
At each step the supporting line in direction $(-S,-1)$ touches the current
lower hull at a point corresponding to a greedy choice.
After the selected point is deleted, the hull changes and new points can become
vertices of the updated hull.  
This example illustrates why a static convex hull is not sufficient and why deletions must be supported dynamically.

\begin{figure}[!htbp]
    \centering

    \begin{subfigure}[t]{0.45\textwidth}
        \centering
        \includegraphics[width=\linewidth]{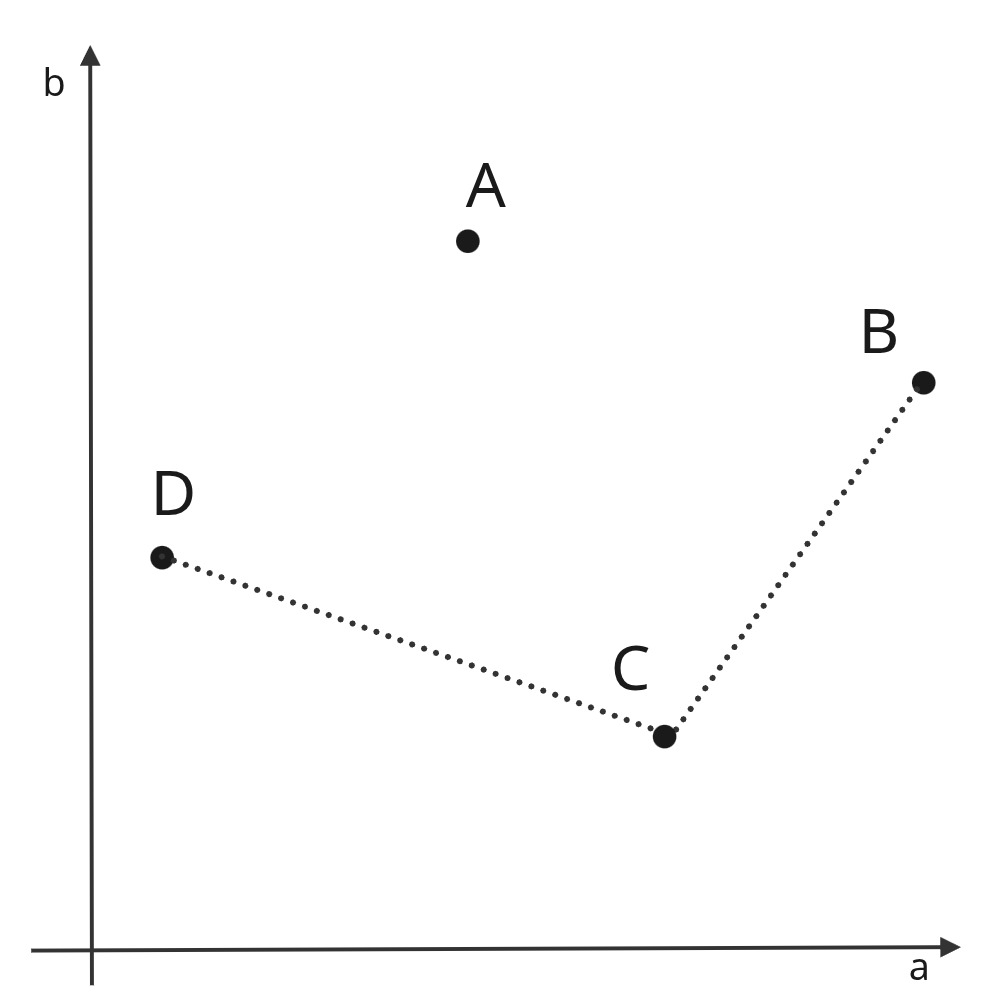}
        \caption{Initial lower hull.}
        \label{fig:convex-step1}
    \end{subfigure}
    \hfill
    \begin{subfigure}[t]{0.45\textwidth}
        \centering
        \includegraphics[width=\linewidth]{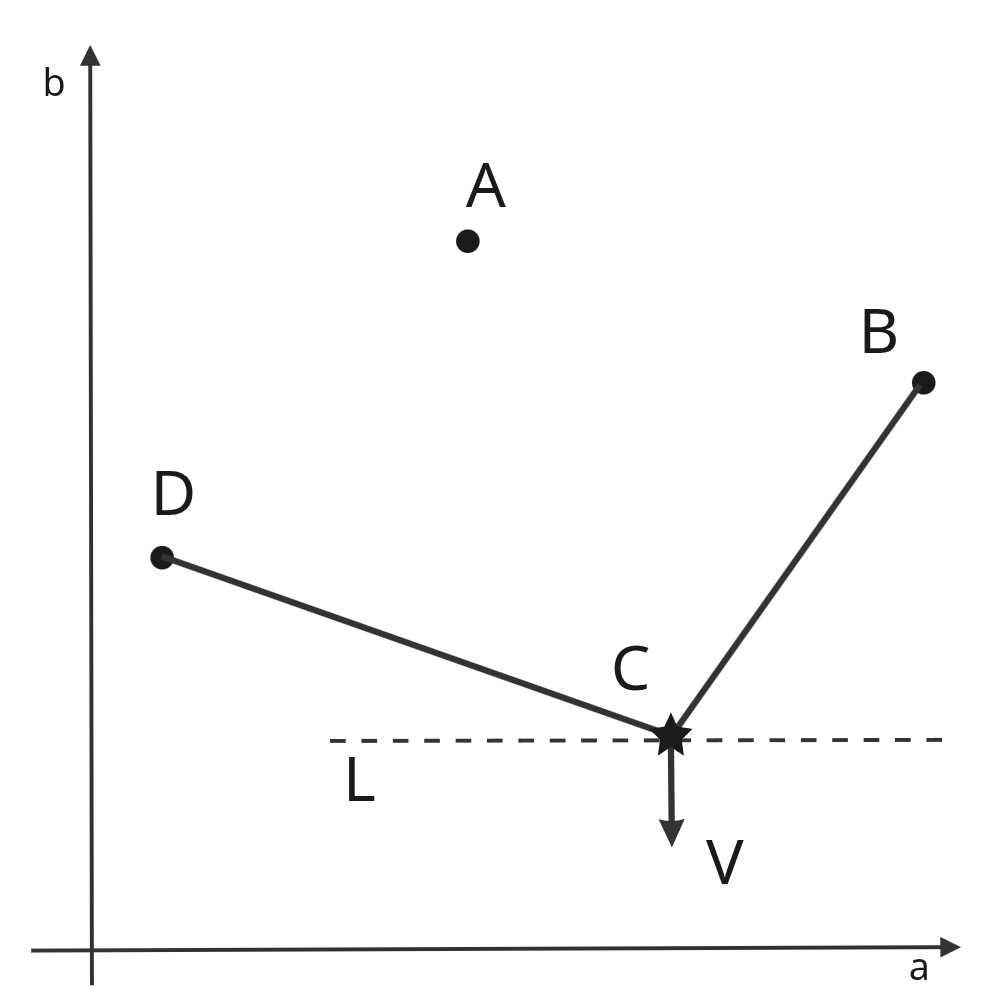}
        \caption{The current query selects $C$.}
        \label{fig:convex-step2}
    \end{subfigure}

    \vspace{0.35em}

    \begin{subfigure}[t]{0.45\textwidth}
        \centering
        \includegraphics[width=\linewidth]{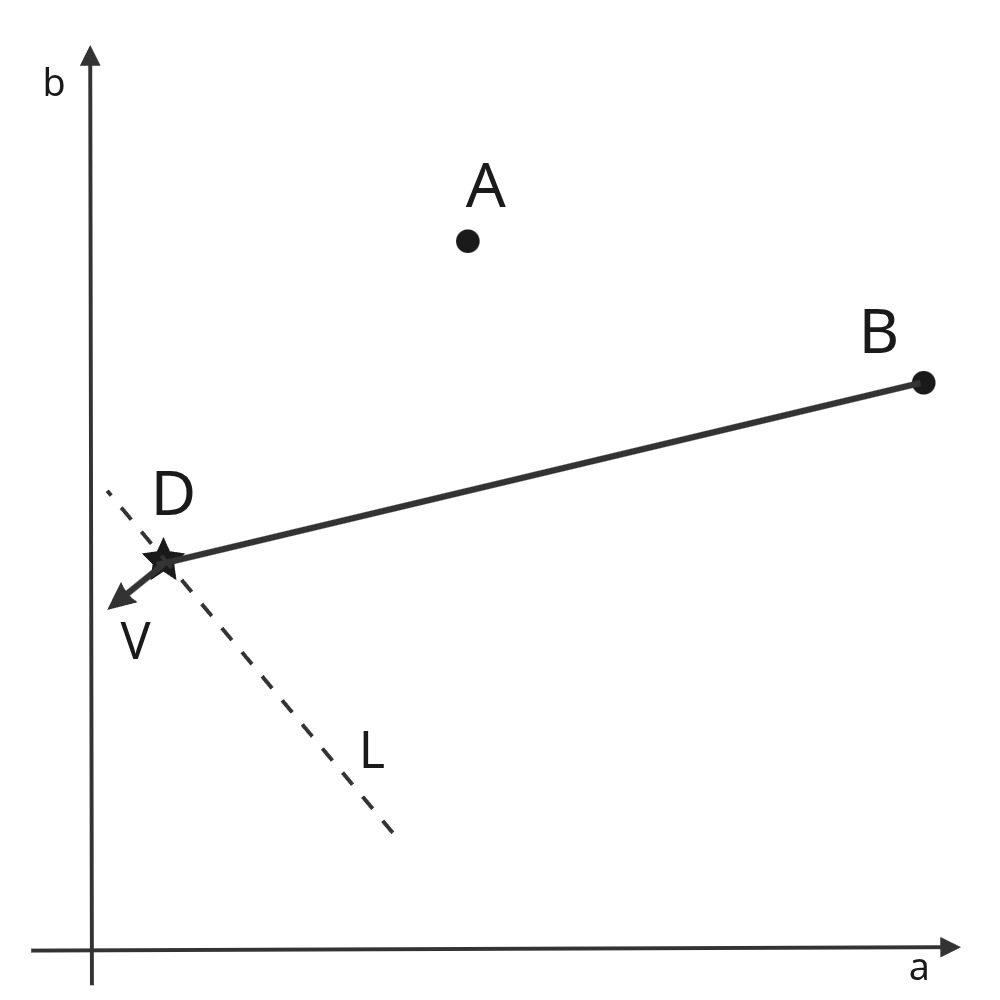}
        \caption{After deleting $C$, the hull changes.}
        \label{fig:convex-step3}
    \end{subfigure}
    \hfill
    \begin{subfigure}[t]{0.45\textwidth}
        \centering
        \includegraphics[width=\linewidth]{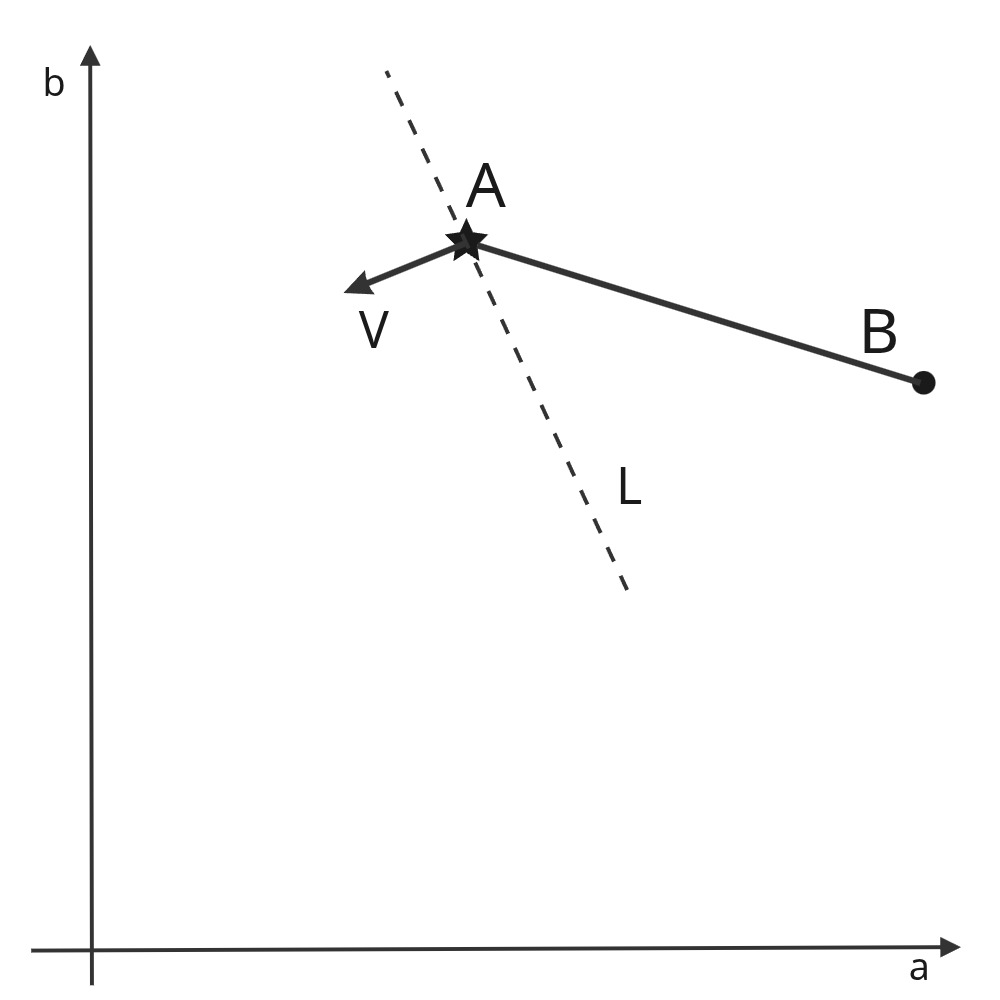}
        \caption{Later queries use the updated hull.}
        \label{fig:convex-step4}
    \end{subfigure}

    \caption{Geometric illustration of the greedy rule.
    Each point represents a job after the transformation
    $q_i=(p_i/w_i,p_i^2/w_i)$. The vector $V=(-S,-1)$ defines the
    supporting-line query for the current accumulated processing time $S$.
    The selected extreme point is marked by a star.}
    \label{fig:convex-hull-greedy-steps}
\end{figure}

\begin{theorem}[Fast implementation]
The greedy algorithm that, at the current value of $S$, selects
$$a\in \mathop{\mathrm{arg\,max}}_{i\in R} \frac{w_i} {p_i(S+p_i)}$$
can be implemented in $O(n\log n)$ time and $O(n)$ space in the real-RAM model.
\end{theorem}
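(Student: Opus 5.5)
The plan is to reduce each greedy step to an extreme-point query on a deletion-only planar convex hull, using the transformation already set up in \eqref{eq:line}--\eqref{eq:extreme}.

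\textbf{Preprocessing.} First separate the zero-weight jobs (footnote~\ref{fn:zero-weight}). For every job $i\in\Theta_+$, compute the point $q_i=(p_i/w_i,\,p_i^2/w_i)$; this is $O(1)$ real-RAM operations per job. Sort the points lexicographically in $O(n\log n)$ time, group coincident points, and attach to each distinct point $q$ a list $L(q)$ of its jobs. This takes $O(n)$ space. Then build the semi-dynamic (deletion-only) convex-hull structure of Hershberger and Suri~\cite{HershbergerSuri1992} on the distinct points, in $O(n\log n)$ time and $O(n)$ space.

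\textbf{Main loop.} Maintain $S$ (initially $0$). At each step, query the structure for an extreme point in direction $(-S,-1)$, i.e.\ a point minimizing $\langle q,(S,1)\rangle$. By \eqref{eq:extreme} and the fact that a linear functional over a finite set is minimized at a hull vertex, any job $a\in L(q)$ for the returned vertex $q$ is an admissible greedy choice. Ties are harmless by the footnote on tie-breaking. Append $a$ to $\pi^G$, add $p_a$ to $S$, and remove $a$ from $L(q)$ in $O(1)$. If $L(q)$ becomes empty, delete $q$ from the hull structure. There are at most $n$ queries at $O(\log n)$ each and at most $n$ deletions at amortized $O(\log n)$ each, so the loop costs $O(n\log n)$. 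When $\Theta_+$ is exhausted, append the zero-weight jobs in $O(n)$, sorting by index if the stated definite order is wanted. This is justified because their priority $0$ is below every positive priority, so the greedy rule picks them last.

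\textbf{Correctness and model.} By induction on the steps, the set stored in the structure equals the set of distinct points of the remaining positive-weight jobs, and the maintained $S$ equals the accumulated processing time. Each returned job therefore lies in $\arg\max_{i\in R}\phi_i(S)$. No operation depends on the magnitudes of the $p_i$, which gives the real-RAM bound independent of $C$.

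\textbf{Main obstacle.} The delicate step is checking that the Hershberger--Suri structure supports the query we need. Their structure maintains the hull under deletions and answers extreme-point (tangent) queries in $O(\log n)$. Only the lower hull matters here, since the query direction $(-S,-1)$ has negative second coordinate. I would verify that degenerate inputs are handled: collinear points and points that reappear as hull vertices after a deletion. Pre-grouping coincident points removes duplicate-point issues, and collinear non-vertex points never need to be returned because some vertex always attains the same minimum. If needed, I would fall back on the standard lower-hull-only version of the structure.
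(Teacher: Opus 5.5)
Your proposal is correct and follows essentially the same route as the paper's proof. Zero-weight jobs are set aside, each remaining job becomes $q_i=(p_i/w_i,\,p_i^2/w_i)$ with coincident points grouped into lists, and the Hershberger--Suri deletion-only hull answers one extreme-point query in direction $(-S,-1)$ per job, deleting a point only when its list empties, which gives $O(n\log n)$ time and $O(n)$ space; your extra remarks on the induction invariant, lower-hull queries and collinear points only refine the same argument.
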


\begin{proof}
Zero-weight jobs are handled separately as described in Footnote~\ref{fn:zero-weight}.
\newline
For every
remaining job $i$, we construct the point $q_i$ according to
\eqref{eq:point}. By \eqref{eq:extreme}, at total processing time
$S$, the greedy choice is exactly an extreme-point query over the set of
remaining points in direction $(-S,-1)$. Once a job is selected, its
entry is removed from its list, and the corresponding point is deleted only
when that list becomes empty. Thus, throughout the algorithm the point
set undergoes deletions only; no insertions are required.
\newline
We therefore use the deletion-only semi-dynamic convex-hull data structure
of Hershberger and Suri~\cite{HershbergerSuri1992}. As Hershberger and Suri proved, this data structure
can be constructed in $O(n\log n)$ preprocessing time, supports
extreme-point queries in $O(\log n)$ worst-case time, and supports point
deletions, including the corresponding update of the convex hull, in
$O(\log n)$ amortized time.
The fully dynamic structure of Brodal and Jacob~\cite{BrodalJacob2002}
is also suitable and additionally supports insertions, which are unnecessary
here but may be useful in future online variants with arriving jobs.
\newline
The greedy algorithm performs one extreme-point query and one list removal
for each positive-weight job, and at most one geometric deletion per such job.
Hence, over at most $n$ iterations, the total
time spent on queries and updates is $O(n\log n)$. Together with the
initial preprocessing, this yields an overall running time of
$$O(n\log n)$$
For $n$ initial points, the data structure of
Hershberger and Suri uses $O(n)$ space. Therefore, the complete
implementation also requires $O(n)$ space.
\end{proof}

\begin{algorithm}[H]
\caption{Convex-hull implementation of the greedy algorithm}
\label{alg:greedy}
\begin{algorithmic}[1]
\Require Jobs $J=\{1,\ldots,n\}$ with $p_i>0$ and $w_i\ge0$
\Ensure A permutation $\pi$
\State $\Theta_+\gets\{i\in J:w_i>0\}$
\For{each $i\in\Theta_+$}
    \State $q_i\gets\left(\dfrac{p_i}{w_i},\dfrac{p_i^2}{w_i}\right)$
\EndFor
\State Group jobs into lists $L(q)$ by identical coordinates
\Comment{$O(n\log n)$}
\State Build the hull structure on the distinct points $q$
\Comment{$O(n\log n)$}
\State $S\gets0$
\State $\pi\gets()$
\While{$\Theta_+\neq\emptyset$}
    \State $q\gets\Call{ExtremePoint}{(-S,-1)}$
    \Comment{$O(\log n)$ worst-case}
    \State Remove a job $\psi^*$ from $L(q)$ \Comment{$O(1)$}
    \State Append $\psi^*$ to $\pi$
    \If{$L(q)$ is empty}
        \State \Call{Delete}{$q$} \Comment{$O(\log n)$ amortized}
    \EndIf
    \State $\Theta_+\gets\Theta_+\setminus\{\psi^*\}$
    \State $S\gets S+p_{\psi^*}$
\EndWhile
\State Append all zero-weight jobs to $\pi$ in increasing index order
\State \Return $\pi$
\end{algorithmic}
\end{algorithm}

\paragraph*{Pseudocode.}
Algorithm~\ref{alg:greedy} summarizes the resulting implementation.
At each iteration, the next job is obtained by an extreme-point query in
direction $V=(-S,-1)$. One job is removed from the selected point's list,
the point is deleted if the list becomes empty, and the accumulated
processing time $S$ is updated.
\newline
The resulting schedule coincides with the schedule produced by the original greedy rule, up to tie-breaking.
Code for experimenting with the greedy algorithm is available at
\url{https://github.com/DR-LLL/ARP-Greedy}.

\section{Conclusions}\label{sec:conclusions}

We proved a $9/20$-approximation guarantee for an existing adaptive greedy
rule for ARP. A reduction to deletion-only convex-hull queries yields an
$O(n\log n)$-time, $O(n)$-space implementation in the real-RAM model.
This operation count is independent of the magnitudes of the input values,
but does not bound the bit complexity of exact arithmetic.
\newline
The greedy value $G$ gives the bounds
$$G\le\operatorname{OPT}\le\frac{20}{9}G.$$
It can therefore initialize an exact search with a feasible solution and
a certified upper bound. Any subsequent search that retains the best
feasible solution found preserves the $9/20$ guarantee even if interrupted.
\newline
The factor $9/20$ is a proved guarantee, not an established tight
worst-case approximation ratio. Determining the exact worst-case ratio
and evaluating the algorithm within exact search methods remain directions
for further research.

\section*{Statements and Declarations}

\noindent\textbf{Competing interests.}
The authors declare that they have no competing interests.

\noindent\textbf{Funding.}
The authors received no specific funding for this work.

\noindent\textbf{Data availability.}
No datasets were generated or analysed during the current study.

\noindent\textbf{Code availability.}
Code for experimenting with the greedy algorithm is available at
\url{https://github.com/DR-LLL/ARP-Greedy}.

\noindent\textbf{Author contributions.}
Both authors contributed to the development of the results and to the writing
of the manuscript.

\bibliography{sn-bibliography}

@book{GamowStern1958,
  author    = {Gamow, George and Stern, Marvin},
  title     = {Puzzle-Math},
  publisher = {Viking Press},
  address   = {New York},
  year      = {1958}
}

@article{Franklin1960,
  author  = {Franklin, J. N.},
  title   = {The Range of a Fleet of Aircraft},
  journal = {Journal of the Society for Industrial and Applied Mathematics},
  year    = {1960},
  volume  = {8},
  number  = {3},
  pages   = {541--548},
  doi     = {10.1137/0108039}
}

@article{Goldman1961,
  author  = {Goldman, A. J.},
  title   = {On the Range of a Fleet of Aircraft},
  journal = {Journal of Research of the National Bureau of Standards, Section B: Mathematics and Mathematical Physics},
  year    = {1961},
  volume  = {65B},
  number  = {4},
  pages   = {237--238}
}

@inproceedings{Woeginger2010,
  author    = {Anderson, Jim and Andersson, Bj{\"o}rn and Azar, Yossi and Bansal, Nikhil and Bini, Enrico and Chrobak, Marek and Correa, Jos{\'e} and Cucu-Grosjean, Liliana and Davis, Rob and Easwaran, Arvind and Edmonds, Jeff and Funk, Shelby and Gopalakrishnan, Sathish and Hoogeveen, Han and Mathieu, Claire and Megow, Nicole and Naor, Seffi and Pruhs, Kirk and Queyranne, Maurice and Ros{\'e}n, Adi and Schabanel, Nicolas and Sgall, Ji{\v r}{\'i} and Sitters, Ren{\'e} and Stiller, Sebastian and Uetz, Marc and Vredeveld, Tjark and Woeginger, Gerhard J.},
  title     = {10071 Open Problems -- Scheduling},
  booktitle = {Scheduling},
  series    = {Dagstuhl Seminar Proceedings},
  volume    = {10071},
  publisher = {Schloss Dagstuhl -- Leibniz-Zentrum fuer Informatik},
  address   = {Dagstuhl, Germany},
  year      = {2010},
  pages     = {1--24},
  doi       = {10.4230/DagSemProc.10071.3}
}

@article{GamzuSegev2019,
  author  = {Gamzu, Iftah and Segev, Danny},
  title   = {A Polynomial-Time Approximation Scheme for the Airplane Refueling Problem},
  journal = {Journal of Scheduling},
  year    = {2019},
  volume  = {22},
  number  = {1},
  pages   = {119--135},
  doi     = {10.1007/s10951-018-0569-x}
}

@article{Vasquez2015,
  author  = {V{\'a}squez, {\'O}scar C.},
  title   = {For the Airplane Refueling Problem Local Precedence Implies Global Precedence},
  journal = {Optimization Letters},
  year    = {2015},
  volume  = {9},
  number  = {4},
  pages   = {663--675},
  doi     = {10.1007/s11590-014-0758-2}
}

@inproceedings{LiEtAl2019,
  author    = {Li, Jianshu and Hu, Xiaoyin and Luo, Junjie and Cui, Jinchuan},
  title     = {A Fast Exact Algorithm for Airplane Refueling Problem},
  booktitle = {Combinatorial Optimization and Applications},
  series    = {Lecture Notes in Computer Science},
  volume    = {11949},
  pages     = {316--327},
  publisher = {Springer},
  address   = {Cham},
  year      = {2019},
  doi       = {10.1007/978-3-030-36412-0_25}
}

@article{Smith1956,
  author  = {Smith, Wayne E.},
  title   = {Various Optimizers for Single-Stage Production},
  journal = {Naval Research Logistics Quarterly},
  year    = {1956},
  volume  = {3},
  number  = {1--2},
  pages   = {59--66},
  doi     = {10.1002/nav.3800030106}
}

@inproceedings{SchulzUdwani2019,
  author    = {Schulz, Andreas S. and Udwani, Rajan},
  title     = {Robust Appointment Scheduling with Heterogeneous Costs},
  booktitle = {Approximation, Randomization, and Combinatorial Optimization.
               Algorithms and Techniques ({APPROX/RANDOM} 2019)},
  series    = {Leibniz International Proceedings in Informatics (LIPIcs)},
  volume    = {145},
  pages     = {1--17},
  note      = {Article 25},
  publisher = {Schloss Dagstuhl -- Leibniz-Zentrum fuer Informatik},
  address   = {Dagstuhl, Germany},
  year      = {2019},
  doi       = {10.4230/LIPIcs.APPROX-RANDOM.2019.25}
}

@article{GmeinerSchulz2026,
  author  = {Gmeiner, Simon and Schulz, Andreas S.},
  title   = {Block Stacking, Airplane Refueling, and Robust Appointment Scheduling},
  journal = {The American Mathematical Monthly},
  year    = {2026},
  volume  = {133},
  number  = {6},
  pages   = {540--559},
  doi     = {10.1080/00029890.2026.2644834}
}

@article{HershbergerSuri1992,
  author  = {Hershberger, John and Suri, Subhash},
  title   = {Applications of a Semi-Dynamic Convex Hull Algorithm},
  journal = {BIT Numerical Mathematics},
  year    = {1992},
  volume  = {32},
  number  = {2},
  pages   = {249--267},
  doi     = {10.1007/BF01994880}
}

@article{LinHe2026,
  author  = {Lin, Hao and He, Cheng},
  title   = {Precedence Criteria and Gradient-Based Scheduling Algorithm for the Airplane Refueling Problem},
  journal = {Chinese Quarterly Journal of Mathematics},
  year    = {2026},
  volume  = {41},
  number  = {1},
  pages   = {38--49},
  doi     = {10.13371/j.cnki.chin.q.j.m.2026.01.003}
}

@article{BansalEtAl2017,
  author  = {Bansal, Nikhil and D{\"u}rr, Christoph and Thang, Nguyen Kim and V{\'a}squez, {\'O}scar C.},
  title   = {The Local--Global Conjecture for Scheduling with Non-linear Cost},
  journal = {Journal of Scheduling},
  year    = {2017},
  volume  = {20},
  number  = {3},
  pages   = {239--254},
  doi     = {10.1007/s10951-015-0466-5}
}

@inproceedings{BrodalJacob2002,
  author    = {Brodal, Gerth St{\o}lting and Jacob, Riko},
  title     = {Dynamic Planar Convex Hull},
  booktitle = {Proceedings of the 43rd Annual IEEE Symposium on Foundations of Computer Science},
  pages     = {617--626},
  publisher = {IEEE},
  address   = {Los Alamitos, CA},
  year      = {2002},
  doi       = {10.1109/SFCS.2002.1181985}
}

@article{MegowVerschae2018,
  author  = {Megow, Nicole and Verschae, Jos{\'e}},
  title   = {Dual Techniques for Scheduling on a Machine with Varying Speed},
  journal = {SIAM Journal on Discrete Mathematics},
  year    = {2018},
  volume  = {32},
  number  = {3},
  pages   = {1541--1571},
  doi     = {10.1137/16M105589X}
}

\end{document}